\setlist{nolistsep} 
\newtheorem{theorem}{Theorem}
\newtheorem{lemma}[theorem]{Lemma}
\newtheorem{corollary}[theorem]{Corollary}
\newtheorem{observation}[theorem]{Observation}
\newtheorem{definition}[theorem]{Definition}
\newtheorem{problem}[theorem]{Problem}
\newcommand{\specialcell}[2][c]{
  \begin{tabular}[#1]{@{}c@{}}#2\end{tabular}}
\newcolumntype{I}{!{\vrule width 1pt}} 
\newlength\savedwidth
\newcommand\whline{\noalign{\global\savedwidth\arrayrulewidth\global\arrayrulewidth 1pt}%
\hline
\noalign{\global\arrayrulewidth\savedwidth}}
\DeclareMathOperator{\Tr}{Tr} 
\DeclareMathOperator{\diag}{diag}
\DeclareMathOperator{\rank}{rank}
\renewcommand{\Pr}{\operatorname{Pr}}
\newcommand{\1}{\mathbb{1}}
\newcommand{\NN}{\mathbb{N}}
\newcommand{\RR}{\mathbb{R}}
\newcommand{\CC}{\mathbb{C}}
\newcommand{\ZZ}{\mathbb{Z}}
\newcommand{\QQ}{\mathbb{Q}}
\newcommand{\KK}{\mathbb{Q}}
\newcommand{\CCalg}{\QQ}
\newcommand{\RRalg}{\QQ}
\newcommand{\mc}[1]{\mathcal{#1}}
\newcommand{\ket}[1]{\left|{#1}\right\rangle}
\newcommand{\ketn}[1]{| #1 \rangle}
\newcommand{\ketb}[1]{\bigl| #1 \bigr\rangle}
\newcommand{\bra}[1]{\left\langle{#1}\right|}
\newcommand{\ketbra}[2]{\ket{#1} \!\! \bra{#2}}
\newcommand{\sandwich}[3]{\left\langle  #1 \right| #2 \left| #3 \right\rangle}
\newcommand{\sandwichb}[3]{\bigl\langle  #1 \bigr| #2 \bigl| #3 \bigr\rangle}
\newcommand{\ad}{^\dagger}
\newcommand{\restr}{\upharpoonright}
\newcommand{\argdot}{ \,\cdot \, }
\newcommand{\ls}{S}
\newlang{\MPOTP}{TP}
\newlang{\BMPOTP}{BTP}
\newlang{\bMPOTP}{(B)TP}
\newlang{\bPCP}{(B)PCP}
\newlang{\BPCP}{BPCP}
\newclass{\distNP}{distNP}
\newcommand{\titletext}{Matrix product operators and states: NP-hardness and undecidability}
\newcommand{\listpacs}{03.67.-a, 02.60.Pn, 03.65.Ud, 89.70.Eg}
\newcommand{\fu}{QMIO Group, Dahlem Center for Complex Quantum Systems, Freie Universit{\"a}t Berlin, 14195 Berlin, Germany}
\newcommand{\uf}{Physikalisches Institut, Universit\"{a}t Freiburg, 79104 Freiburg, Germany}
\newcommand{\listkeys}{%
MPO, 
matrix product density operator, MPDO,
realization problem,
finitely correlated state, FCS,
local purification,
threshold problem,
bounded Post correspondence problem,
positive matrix factorization,
positive semidefinite rank,
undecidable tensor problem
}
\begin{document}

\title{\titletext}

\author{M.\ Kliesch} \affiliation{\fu}
\author{D.\ Gross} \affiliation{\uf}
\author{J.\ Eisert} \affiliation{\fu}



\begin{abstract}
Tensor network states constitute an important variational set of quantum states for numerical studies of strongly correlated systems in condensed-matter physics, as well as in mathematical physics. This is specifically true for finitely correlated states or matrix-product operators, designed to capture mixed states of one-dimensional quantum systems. It is a well-known open problem to find an efficient algorithm that decides whether a given matrix-product operator actually represents a physical state that in particular has no negative eigenvalues. We address and answer this question by showing that the problem is provably undecidable in the thermodynamic limit and that the bounded version of the problem is \NP-hard in the system size. Furthermore, we discuss numerous connections between tensor network methods and (seemingly) different concepts treated before in the literature, such as hidden Markov models and tensor trains. 
\end{abstract}

\maketitle

\section{Introduction}
Computational quantum many-body physics is marred by the fact that standard computational descriptions of states require exponentially many parameters. 
Fortunately, for many physically relevant problems, one does not need to consider all those parameters to capture natural properties extremely accurately. 
One of the pillars on which computational many-body approaches rest is the framework of tensor network methods. 
Here, the relevant degrees of freedom are parameterized by very few numbers, which are organized in terms of tensor networks that are contracted in order to compute expectation values
\cite{Schollwoeck,Rommer,Review,Reviews1,Reviews2,Reviews3,WolfReview,AreaTensorNetworks}. 
Notably, the \emph{density matrix renormalization group approach}, the most successful method to numerically determine ground state properties of strongly correlated one-dimensional models, can be cast into such a form \cite{Schollwoeck,Rommer}. 
In this language, the problem of minimizing the energy can be phrased as a variational principle over \emph{matrix product} (or \emph{purely generated $C^\ast$-finitely correlated}) \emph{states} \cite{FanNacWer92}. 
The natural analogue that also encompasses mixed quantum states are \emph{matrix product operators}. 
Again, they feature strongly in numerical algorithms \cite{VerGarCir04,ZwoVid04}, for example when investigating stationary states of local Liouvillians modelling open quantum systems \cite{Ban13,CaiBar13} or Gibbs states \cite{Gibbs,Barthel}. 

However, general matrix product operators are not guaranteed to represent physical states, which is the source of considerable conceptual and computational difficulties. 
It would thus be highly desirable to design an efficient algorithm capable of checking whether a given matrix product representation defines a positive operator. 
To decide if such an efficient ``local test for positivity'' exists is a fundamental problem in the field, implicit already in its early formulations \cite[Appendix]{FanNacWer92}.

Here, we address and answer this question: 
Determining whether a matrix product operator defines a physical state in the thermodynamic limit is a provably \emph{undecidable problem}. 
We also show that the bounded version of the problem is \NP-hard in the number of tensors, burying hopes that one could find an efficient algorithm testing for positivity exactly. 
This is proven for quantum spin chains with local dimension $d=2$ by a polynomial reduction from the \emph{Post correspondence problem} and a bounded variant thereof. 

To give a practical example, one can approximate stationary states of local Liouvillians by iteratively applying the Liouvillian to a state described as a matrix product operator and subsequently truncating the tensors. 
To avoid inconsistent results, one has to check whether the truncation step has caused the state to become ``too unphysical'' in that it has created eigenvalues that are more negative than some chosen tolerance threshold. 
We prove such a check to be unfeasible. 
The practical implications of our work are as follows: One hand they motivate the quest for finding specific feasible instances that might exist.
This quest is reminiscent of the task of finding, e.g., efficient contractions of two-dimensional planar tensor networks, even though this task has been identified to be $\#\P$-complete \cite{SchuchPEPSContraction}.
On the other hand, it shows that one should direct one's efforts towards finding approximate solutions.

The insight presented here adds a natural many-body problem to the list of quantum mechanical questions that have recently been identified not only as computationally hard, but as outright undecidable \cite{EisMueGog12,WolCubPer11,CubPerWol14,MorBia12}.
Along the way of introducing these novel results, we discuss a number of connections between concepts that have arisen in the literature, but whose relation has received surprisingly little attention (see Table~\ref{tab:concepts}). 

\begin{table}
\begin{tabular}{I c I c | c I}
\whline States
& Classical & Quantum 
\\ \whline
\specialcell{Pure} & 
\specialcell{Deterministic\\ finite automata\\ \cite{HopMotUll07}} & 
\specialcell{\phantom{$|^|$}Matrix-product states \cite{Review}, 
\\ purely generated $C^\ast$-finitely \\correlated states \cite{FanNacWer92},
\\ tensor trains \cite{TT}}
\\ \hline
\specialcell{Mixed,\\inherently \\positive} & 
\specialcell{Hidden Markov \\ models \cite{Vid11},\\ probabilistic finite\\ automaton \cite{BloCat03}} & 
\specialcell{$C^\ast$-finitely correlated states \cite{FanNacWer92},\\ local purification \cite{CueSchPer13}, 
\\ quantum Markov chains \cite{Guta}}
\\ \hline
\specialcell{Mixed,\\ not inherently\\ positive} & 
\specialcell{Quasi-realizations\\ \cite{Vid11}} & 
\specialcell{Finitely correlated states \cite{FanNacWer92}, \\ matrix product density\\ operators \cite{VerGarCir04,ZwoVid04}}
\\ \whline
\end{tabular}
\caption{Concepts of tensor networks discussed here.}\label{tab:notions}\label{tab:concepts}
\end{table}

\section{Tensor networks} 
In quantum many-body theory, tensor network methods are widely used in order to avoid
intractability problems. 
The idea is to resort to variational classes of states, where the attention is restricted to low-dimensional manifolds of states that seem to capture well the relevant physics of the model under study. 
It is less widely appreciated in the physics community that similar structures are ubiquitous in classical probability theory: 
The hidden Markov model (HMM) is a generalization of a Markov chain, where the observable process $(Y_t)_{t \in \NN}$ does not need to be Markovian but there is a stochastic process $(X_t)_{t\in \NN}$ carrying additional information that renders the combined process $(X_t, Y_t)$ Markovian. 
We only consider the case where $X_t$ and $Y_t$ have finitely many outcomes and call the number of outcomes of $X_t$ the \emph{bond dimension} $D$. 
With transition probabilities 
$M^{(\alpha)}_{i,j} = \Pr[(X_{t+1},Y_{t+1}) = (j,\alpha) \mid X_{t} = i]$, 
boundary condition $p_j = \Pr[X_1 = j]$, and $\underline{1} \coloneqq (1,1,\dots,1)^T\in \RR^D$, the probabilities of outcome sequences of the process $Y_t$ are given by the matrix product
\begin{equation}\label{eqn:hmm}
	\Pr[Y_1 = \alpha_1, \dots, Y_n = \alpha_n]
	= p \, M^{(\alpha_1)} \dots M^{(\alpha_n)} \underline 1 \, .
\end{equation}
In order for an HMM to describe a stationary process, $p$ is usually taken to be a stationary distribution, i.e., $ \sum_{\alpha=1}^d M^{(\alpha)} p = p$. 
The description complexity of the HMM is independent of $n$, or, if we allow the Markov kernels $M$ to vary as a function of $t$, linear. 
Non-negativity of the probabilities in Eq.~\eqref{eqn:hmm} is guaranteed because they arise as the contraction over element-wise non-negative tensors. 

From Eq.~\eqref{eqn:hmm}, it follows that the matrices $F^{(k,n)}$ defined by 
\begin{equation}\nonumber 
	F^{(k,n)}_{(\alpha_1, \dots, \alpha_k),(\alpha_{k+1}, \dots, \alpha_{k+n})}
	\coloneqq
	\Pr[Y_1 = \alpha_1, \dots, Y_{k+n} = \alpha_{k+n}]
\end{equation}
have rank at most $D$, which upper bounds the so-called \emph{Hankel-rank} \cite{Vid11}. 

A natural question is whether the rank condition alone characterizes those distributions that allow for a HMM with bond dimension $D$. 
It has been known since the 1960s that this is \emph{not} the case: There are distributions where $\rank (F^{(k,n)} )\leq D$ for all $k,n$, yet no HMM with finite bond dimension exists \cite{FoxRub68,DhaNad70}. 
However, a relatively straight-forward argument (based on sequential ``rank-revealing decompositions'', e.g., singular value decompositions) shows that every distribution with rank bounded by $D$ allows for a representation as in Eq.~\eqref{eqn:hmm} where the tensors $M$ are \emph{not necessarily positive}. 
This fact seems to have been discovered independently in different contexts, e.g., Refs.\ \cite{HMMSVD,FanNacWer92,VidalSVD,WolfReview,TT}. 
The resulting form is known as a \emph{quasi-realization}, offering the same concise description of the distribution as a HMM. 
These are, however, more difficult to work with computationally, as any variation of the local tensors can destroy global positivity. 
An important question thus is as follows: Are the conditions on the tensor $M$ that guarantee global positivity computationally efficiently verifiable? 
As we will see, the answer is \emph{no}. 

\begin{figure}
\includegraphics{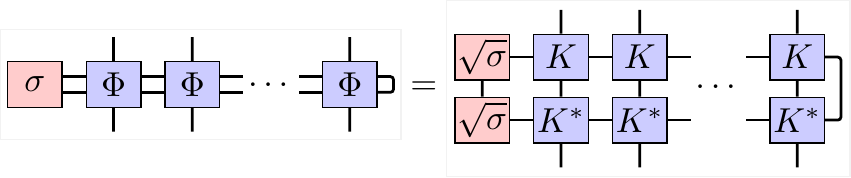} 
\caption{
A $C^\ast$-FCS as a tensor network. 
The channel $\Phi$ can be written in terms of Kraus operators, $\Phi(\rho) = \sum_i K_i\ad \rho K_i$. 
The vertically contracted indices between $K$ and $K^\ast$  correspond to the sum over $i$. 
The tensor network to the right is referred to as a local purification.
}\label{fig:C-star-FCS}
\end{figure}

The above constructions generalize to the quantum setting: 
A \emph{$C^*$-finitely correlated state}  \cite{FanNacWer92} (a.k.a.\ \emph{quantum Markov chain} \cite{Guta,Gohm}) $\rho$ is obtained by replacing the elements of Eq.~\eqref{eqn:hmm} by their quantum counterparts: 
We substitute $p$ by a $D\times D$ density matrix $\sigma$, the stochastic map $M$ by a completely positive map $\Phi$ that maps states on $\CC^D$ to those on $\CC^D \otimes \CC^d$, and $\underline{1}$ by the partial trace (cp.\ Figure~\ref{fig:C-star-FCS}). 
This immediately yields a \emph{local purification} \cite{CueSchPer13}: 
One can write $\Phi$ in Kraus representation by choosing operators 
$K_i:\CC^D \to \CC^D \otimes \CC^d$ 
satisfying $\sum_{i=1}^E K_i K_i\ad = \1$ and 
$\Phi(\argdot) = \sum_{i=1}^{E} K_i\ad \argdot K_i$. 
Here, $E$ can be assumed to be smaller than or equal to $d\, D$. 
Then the $n$-fold application of $K$ to a purification $\ket{\sqrt \sigma}$ of $\sigma$ yields the state vector $\ket\psi = K\circ K \circ \dots \circ K \ket{\sqrt \sigma}$ in  
$\CC^D \otimes \left(\CC^d \otimes \CC^{E}\right)^{\otimes n} \otimes \CC^D$ that is a local purification of $\rho$, see Figure~\ref{fig:C-star-FCS}. 
For quantum states one can, once more, define ``quasi-representations''. 
Here $\Phi$ can be a general linear map and $\sigma$ is some operator (no positivity constraints). 
This results in what is known as a \emph{matrix product density operator} (MPDO) or \emph{finitely correlated state} (FCS) (not $C^*$-FCS). 
A discussion of different notions of positivity is provided in the Appendix. 
More concretely, with $[d] \coloneqq \{1,2, \dots, d\}$, 
an MPDO is a density matrix written in the following form: 


\begin{definition}[Matrix product operator (MPO)] 
An instance of \emph{MPO-tensors} is given by 
$M = \bigl(M^{(\alpha,\beta)}_{i,j}\bigr)_{\alpha,\beta \in [d], \, i,j \in [D]} \in \CC^{d\times d\times D\times D}$ 
and \emph{boundary vectors} $\ket L, \ket R \in \CC^{D}$. 
The dimension $d$ is called the \emph{physical dimension} and $D$ the \emph{(MPO)-bond dimension.}
The generated translation invariant \emph{MPO} for \emph{system size} $n$ is
\begin{equation}\nonumber
	\rho(L,M,R,n)\coloneqq \sum_{j \in [D]^{n+1}}
	L_{j_1} M_{j_1,j_2}\otimes \dots \otimes M_{j_{n},j_{n+1}} R_{j_{n+1}}.
\end{equation}
\end{definition}

\section{Main results}
In order to precisely define the problems that are shown to be computationally unfeasible, we employ the standard language of theoretical computer science: The task of identifying objects with a certain property (e.g., those MPOs that are positive) is a \emph{decision problem}. 
A specific case (e.g., given by a concrete tensor and boundary vectors) is an \emph{instance}. 
A decision problem is \emph{\NP-hard} if it is at least as hard (in a precise sense) as all other problems from the complexity class \NP. 
It is deemed highly unlikely that any \NP-hard problem can be solved efficiently on either a classical or a quantum computer. 
A problem is \emph{(Turing) undecidable} if no computer, even if endowed with unbounded resources, is capable of correctly solving all instances. 
In the statements of the various problems below, MPO-tensors are specified by rational numbers. 
These have finite descriptions and can thus serve as inputs to computer programs. 
Allowing for more general numbers (e.g.\ complex rationals) would make the problem only harder. 

In the precise statement of the problem, we allow for a threshold $\lambda$ which bounds the ``degree of negativity'' that is deemed acceptable. 
Moreover, we call positive semi-definite operators more concisely just \emph{positive}. 

\begin{problem}[Bounded MPO threshold problem (\BMPOTP)]\label{p:BMPOTP}
\label{p:positivity} \hfill 
\begin{description}[noitemsep,leftmargin=\widthof{Question:},font=\normalfont]
 \item [Instance] MPO-tensors $M \in \CCalg^{d\times d\times D\times D}$, $\ket L, \ket R \in \CCalg^D$, threshold $\lambda\in\RRalg$, and system size $n$.
 \item [Question] Is the MPO $\rho(L,M,R,n)+\lambda \1$ positive?
\end{description}
\end{problem}

\begin{problem}[MPO threshold problem (\MPOTP)]\label{p:MPOTP}
The \MPOTP{} is defined in the same way as the \BMPOTP{} except that there is no restriction on the system size and the question is: 
Is there an $n \in \ZZ^+$ such that $\rho(L,M,R,n)+\lambda \1$ is \emph{not} positive? 
\end{problem}

We obtain the following results, where the latter one adapts ideas from Ref.~\cite{BloCat03} to the quantum setting:

\begin{theorem}[\NP-hardness of the bounded MPO threshold problem]\label{thm:hardness}
For any $\lambda \in \QQ$ and physical dimension $d\geq2$, the \BMPOTP{} is \NP-hard.
\end{theorem}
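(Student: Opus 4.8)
The plan is to reduce from the bounded Post correspondence problem \BPCP{} (more precisely its modified variant, in which the first tile is prescribed), a standard \NP-complete problem. An instance is a list of string pairs $(a_i,b_i)_{i\in[k]}$ over an alphabet $\{1,\dots,\sigma\}$ together with a length bound $K$; it is a yes-instance iff some sequence $i_1,\dots,i_m$ with $1\le m\le K$ and $i_1=1$ satisfies $a_{i_1}\cdots a_{i_m}=b_{i_1}\cdots b_{i_m}$. The key simplification is to build a \emph{diagonal} MPO, i.e.\ one with $M^{(\alpha,\beta)}=\delta_{\alpha\beta}M^{(\alpha)}$. For such tensors one computes directly from the definition that
\begin{equation}\nonumber
	\rho(L,M,R,n)=\sum_{s\in[d]^n}\bigl(L^{T}M^{(s_1)}\cdots M^{(s_n)}R\bigr)\,\ketbra{s}{s},
\end{equation}
so $\rho$ is real diagonal, hence Hermitian, and $\rho+\lambda\1$ is positive iff $L^{T}M^{(s_1)}\cdots M^{(s_n)}R\ge-\lambda$ for \emph{every} physical string $s$. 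Thus a negative eigenvalue exists iff some length-$n$ string drives the matrix product below $-\lambda$, and the whole task reduces to engineering the $M^{(i)}$ so that this happens precisely when $s$ encodes a PCP solution.

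Next I would choose the $M^{(i)}$ to track the two strings being spelled out. Encoding a word $w$ over $\{1,\dots,\sigma\}$ as the base-$r$ integer $\mathrm{enc}(w)=\sum_j w_j\,r^{\,|w|-j}$ with $r=\sigma+1$ makes $\mathrm{enc}$ injective across all lengths, so the difference $\delta(s)\coloneqq\mathrm{enc}(a_{s_1}\cdots a_{s_n})-\mathrm{enc}(b_{s_1}\cdots b_{s_n})$ vanishes iff the top and bottom concatenations coincide. Appending a tile acts \emph{affinely} on the pair $(\mathrm{enc}(A),\mathrm{enc}(B))$, namely $\mathrm{enc}(A)\mapsto r^{|a_i|}\mathrm{enc}(A)+\mathrm{enc}(a_i)$ and likewise for $B$; hence it acts \emph{linearly} on the symmetric square $\mathrm{Sym}^2$ of the homogeneous coordinates $(\mathrm{enc}(A),\mathrm{enc}(B),1)$, a six-dimensional space. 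Taking this symmetric square as the bond space (so $D=6$) lets a single linear read-out $R$ extract the \emph{quadratic} quantity $\delta(s)^2$, and folding a constant into $R$ produces the diagonal value $\delta(s)^2-\mu$. I would set $\mu=\lambda+\tfrac12$: since $\delta(s)$ is an integer, a solution gives $\delta(s)=0$ and value $-\mu<-\lambda$, whereas any non-solution gives $|\delta(s)|\ge1$ and value $\ge1-\mu\ge-\lambda$. The prescribed first tile is enforced by absorbing $M^{(1)}$ into $\ket L$, and a length-$\le K$ bound is matched to fixed system size by adjoining a padding tile $M^{(0)}=\1$ that leaves $\delta$ unchanged; with the first tile forced, injectivity rules out the spurious empty match. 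Consequently $\rho+\lambda\1$ is positive iff the \BPCP{} instance has \emph{no} solution, with all of $M,\ket L,\ket R,\lambda,\mu$ rational and of size polynomial in the instance, so any decision procedure for \BMPOTP{} decides \BPCP{} by negating its answer, establishing \NP-hardness.

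Finally I would lower the physical dimension to $d=2$ as claimed: replace each tile site by a block of $\lceil\log_2(k+1)\rceil$ qubit sites carrying the tile index in binary, enlarging the bond space to remember the bits read so far within a block and applying the appropriate $M^{(i)}$ once a block is complete. This keeps the tensor diagonal, multiplies $D$ by $\landauO(k)$ and $n$ by $\landauO(\log k)$, and therefore stays polynomial; higher $d$ follows by trivial embedding. I expect the main obstacle to be the central design step of the second paragraph --- arranging \emph{local, linear} tensors whose contracted diagonal entry is the \emph{quadratic} gap $\delta(s)^2$, so that one global linear threshold separates solutions from non-solutions. The $\mathrm{Sym}^2$ construction together with the integer gap $\delta(s)^2\in\{0\}\cup[1,\infty)$ is exactly what closes this gap and makes the reduction robust for every fixed rational $\lambda$; the remaining points (injective encoding, forced first tile, padding, binary rewriting) are routine bookkeeping.
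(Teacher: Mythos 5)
Your proposal is correct and, at its core, is the same reduction as the paper's: a diagonal MPO-tensor $M^{(\alpha,\beta)}=\delta_{\alpha\beta}M^{(\alpha)}$ makes positivity of $\rho+\lambda\1$ equivalent to every matrix-product value clearing $-\lambda$, and your $\mathrm{Sym}^2$ construction on the homogeneous coordinates $\bigl(\mathrm{enc}(A),\mathrm{enc}(B),1\bigr)$ is exactly the paper's Lemma~\ref{lem:d6}: the explicit $6\times 6$ matrices $A(u,v)$ there (following Blondel--Canterini and Hirvensalo) are the symmetric square of the $3\times 3$ affine word-appending action, with $\ket L,\ket R$ reading out $\bigl(\sigma(U(w))-\sigma(V(w))\bigr)^2-(\lambda+1)$; your $\mu=\lambda+\tfrac12$ plays the role of the paper's $\lambda+1$, and both exploit the integrality gap $\delta^2\in\{0\}\cup[1,\infty)$ in the same way. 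You deviate in two places, both legitimately. First, the paper reduces from the exact-length \BPCP{} of Observation~\ref{obs:PCP_NP}, so it needs no forced first tile and no padding; your ``length $\le K$ plus identity padding tile plus first tile absorbed into $\ket L$'' variant also works, but you then owe the (standard, if unproved here) fact that the \emph{bounded modified} PCP is NP-hard, whereas the paper's Observation covers only the unmodified exact-length version. Second, for $d=2$ the paper uses the Blondel--Tsitsiklis cyclic-shift gadget of Lemma~\ref{lem:kto2} (block-diagonal $B^{(1)}$, shift $B^{(2)}$, multiplying both $D$ and $n$ by $d$), while you use binary block-encoding with bond-space memory of the partial block, multiplying $D$ by $\landauO(k)$ and $n$ by $\landauO(\log k)$; both are polynomial, and yours is actually more economical in system size. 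The one loose end in your $d=2$ step is that binary blocks encoding an index outside $\{0,\dots,k\}$ must be routed somewhere harmless (e.g.\ to your identity padding matrix), since otherwise those diagonal entries are undefined and could in principle be spurious; with the padding tile already in hand this is a one-line fix, but it should be stated.
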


\begin{theorem}[Undecidability of the MPO threshold problem]\label{thm:undecidability}
For each threshold $\lambda\in \RRalg$ the \MPOTP{} is undecidable. 
In particular, this holds for the case where the physical dimension is $d=2$, the bond dimension is $D=42$, and the matrices $M_{i,j}$ are diagonal for all $i,j =1,\dots, D$. 
\end{theorem}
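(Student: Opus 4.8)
The plan is to reduce the (undecidable) Post correspondence problem (PCP) to the \MPOTP, restricting throughout to the diagonal tensors promised in the statement. First I would exploit diagonality: if every $M_{i,j}$ is diagonal, then each summand $M_{j_1,j_2}\otimes\dots\otimes M_{j_n,j_{n+1}}$ is diagonal, so $\rho(L,M,R,n)$ is diagonal in the computational basis and its eigenvalues are exactly its diagonal entries. Writing $A^{(a)}_{i,j} \coloneqq (M_{i,j})_{a,a}$ for $a\in[d]=\{1,2\}$, the entry indexed by a binary word $w=(w_1,\dots,w_n)$ is the matrix product $f(w) = \bra{L} A^{(w_1)}\cdots A^{(w_n)}\ket{R}$, structurally identical to the automaton weight in Eq.~\eqref{eqn:hmm}. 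Consequently $\rho(L,M,R,n)+\lambda\1$ fails to be positive exactly when some length-$n$ word satisfies $f(w) < -\lambda$, so the \MPOTP{} becomes the strict threshold-emptiness question: does there exist a word $w$ over $\{1,2\}$ with $f(w)<-\lambda$? This is precisely the type of cut-point emptiness problem shown undecidable for fixed-dimension (probabilistic) automata in Ref.~\cite{BloCat03}, and the remaining work is to realize such an instance with diagonal, rational MPO-tensors of the claimed size.

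For the reduction itself I would encode words numerically. Fixing a base $b$ larger than the PCP alphabet and mapping letters to digits in $\{1,\dots,b-1\}$, the assignment $\gamma(a)=\bigl(\begin{smallmatrix} b & 0\\ a & 1\end{smallmatrix}\bigr)$ turns string concatenation into matrix multiplication, so that a $2$-dimensional register transports the pair $(b^{|w|},\nu(w))$, where $\nu$ is the base-$b$ value. Because every digit is at least $1$, the leading digit forces $\nu$ to be injective on all finite strings (the value ranges for different lengths do not overlap), so two concatenations coincide iff their values coincide. For a PCP instance with pairs $(u_i,v_i)_{i=1}^k$ I would place the two registers for the running $u$- and $v$-concatenations side by side (a block-diagonal $4$-dimensional update $\gamma_u(i)\oplus\gamma_v(i)$ per index $i$), and then pass to the Kronecker square of this linear representation. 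On the squared system every quadratic form in $\nu(u_{\mathrm{seq}}),\nu(v_{\mathrm{seq}})$ is a fixed linear functional of the product matrix; in particular a suitable choice of $\ket L,\ket R$ extracts $g(s)=(\nu(u_{\mathrm{seq}})-\nu(v_{\mathrm{seq}}))^2$, augmented by a one-dimensional constant channel that subtracts an integer $c$. By the integer gap, $g(s)=0$ for index sequences $s$ that solve the PCP and $g(s)\ge 1$ otherwise, so choosing the rational offset $c$ with $c<-\lambda\le c+1$ (possible for every $\lambda\in\RRalg=\QQ$) yields $f<-\lambda$ exactly for solutions and $f\ge-\lambda$ for non-solutions. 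Hence a below-threshold word exists iff the PCP is solvable, which is undecidable.

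Finally I would force the stated parameters. Since $d=2$, each PCP index must be written as a fixed binary block over the physical alphabet $\{1,2\}$; the representation is extended by a small bit-parser that applies the (tensor-squared) update $\gamma_u(i)\oplus\gamma_v(i)$ once a complete block for index $i$ has been read, together with a sink that routes every malformed word, and the empty index sequence, to a safe value $\ge-\lambda$, so that only genuine solutions can dip below the threshold. All entries are rational by construction, and setting $M^{(a,a)}_{i,j}$ equal to the assembled matrices (with off-diagonal physical entries zero) realizes the instance with diagonal $M_{i,j}$; a careful accounting of the squared register, the constant channel, the block-parser, and the sink yields bond dimension $D=42$. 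I expect the main obstacle to be exactly this last, quantitative step: simultaneously (i) realizing the quadratic separator via the Kronecker square, (ii) parsing PCP indices and neutralizing malformed words without disturbing the integer gap, and (iii) doing all of this within the fixed dimension while keeping the strict separation $c<-\lambda\le c+1$ valid for an arbitrary rational threshold. The conceptual content --- diagonalization plus the PCP value-encoding --- is comparatively robust; the bookkeeping needed to hit $D=42$ exactly is the delicate part.
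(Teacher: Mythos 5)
Your overall route is the same as the paper's: exploit diagonality so that the MPO's eigenvalues are the values $\bra{L}A^{(w_1)}\cdots A^{(w_n)}\ket{R}$ of a matrix morphism on words, turning the \MPOTP{} into a cut-point emptiness problem in the style of Ref.~\cite{BloCat03}; encode the \PCP{} via base-$b$ numeric values with digits bounded away from zero (your injectivity argument matches the paper's bound $b^{|w|-1}\leq\sigma(w)\leq b^{|w|}-1$); separate solutions from non-solutions by the integer-valued square $(\sigma(U(w))-\sigma(V(w)))^2$ shifted by a rational constant (the paper uses the offset $\lambda+1$, giving $-(\lambda+1)$ versus $\geq-\lambda$; your bracketing of $c$ has a sign slip but is repairable); and reduce the alphabet to $\{1,2\}$ by a morphism that blows up the bond dimension by a factor $d$. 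Up to that point the proposal is sound and is essentially the paper's proof assembled from the same ingredients of Refs.~\cite{BloCat03,Hir06,BloTsi97}.

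The genuine gap is the quantitative claim $D=42$, and it is not mere bookkeeping. The paper reaches $42=6\times 7$ by two compressions you lack. First, instead of the full Kronecker square of the $4$-dimensional representation $\gamma_u(i)\oplus\gamma_v(i)$ (dimension $16$, plus your separate constant channel, so at least $17$), it uses the $6\times 6$ matrices of Refs.~\cite{BloCat03,Hir06}, which are in effect the \emph{symmetric} square of a $3$-dimensional representation: the bottom row $\bigl(\sigma(u)^2,\,2\sigma(u)\sigma(v),\,\sigma(v)^2,\,2\sigma(u),\,2\sigma(v),\,1\bigr)$ carries the whole quadratic, and the constant channel is already the $(6,6)$ entry, absorbed into $\ket{R}=\ket{1}-\ket{2}+\ket{3}-(\lambda+1)\ket{6}$. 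Second, for the $d\to 2$ step the paper builds no parser and no sink: following Ref.~\cite{BloTsi97} it takes $B^{(1)}=\diag\bigl(A^{(1)},\dots,A^{(d)}\bigr)$ and $B^{(2)}=\ls\otimes\1_D$ with $X(\alpha)=2^{\alpha-1}1\,2^{d-\alpha}$, which multiplies the dimension by exactly $d$ and automatically makes every binary word evaluate either to $0$ or to $\sandwich{L}{A^{(w')}}{R}$ for some genuine domino sequence $w'$ --- the ``sink'' is internal to the morphism, which matters because in an MPO \emph{every} length-$n$ word indexes a diagonal entry, so a sink cannot be bolted on as an acceptance condition. With the $d=7$ undecidable \PCP{} of Ref.~\cite{MatSen05} this yields $6\cdot 7=42$, whereas your components give at best roughly $17\cdot 7$ before counting parser states. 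One point in your favor: your insistence on routing malformed words \emph{and the empty index sequence} to safe values is warranted --- in the paper's construction the all-$2$ words of length $7k$ evaluate to $\braket{L}{R}=-(\lambda+1)$ and off-cycle words to $0$, a case the paper's terse proof leaves implicit and which genuinely needs the kind of care you describe; but identifying that issue does not supply the missing compression, and without it the stated $D=42$ is out of reach.
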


\section{Outlook}
An important question is whether there are physically relevant instances for which positivity is efficiently decidable and how this can be exploited best in numerical algorithms. 
Sometimes one can, e.g., efficiently detect negativity locally by calculating expectation values with respect to matrix product states (MPS) of small bond dimension, see Figure~\ref{fig:MPS_test}. 
\begin{figure}
\includegraphics{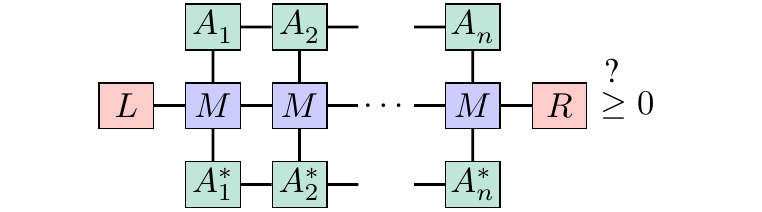}
\caption{Contracting MPOs with MPS can detect negativity for some instances. Hence, this provides a hierarchy of efficient tests labeled by the MPS-bond dimension, a strategy 
practically accessible by DMRG approaches.
}\label{fig:MPS_test}
\end{figure}

In Ref.~\cite{CueSchPer13} local purifications of positive MPOs in terms of matrix product states are investigated and it is shown that the arising MPS-bond dimension can in general not be bounded independently of the system size. 
This already suggest that such purifying MPS would require high bond dimensions when used instead of MPOs in numerical simulations.
However, two constructive purification methods are suggested that are efficient when the rank of the MPO is polynomially bounded but in general necessarily inefficient \cite{CueSchPer13}. 
From our Theorem~\ref{thm:hardness} it also follows that this is no coincidence. 
To be more precise, a \emph{local purification method} is an algorithm that receives an MPO instance $M, \ket{L}, \ket{R}$ and a system size $n$ with $\rho(L,M,R,n)\geq 0$ as input and outputs a local purification of $\rho(L,M,R,n)$. 

\begin{corollary}
 Local purification methods are inefficient in the system size (assuming that $\P\neq \NP$). 
\end{corollary}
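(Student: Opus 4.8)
The plan is to argue by contradiction: I assume a local purification method $A$ exists that runs in time polynomial in the system size $n$ and the bit-size of its input, and I use it to solve the \BMPOTP{} in polynomial time, contradicting Theorem~\ref{thm:hardness} under the hypothesis $\P \neq \NP$. The conceptual bridge is the elementary observation that a local purification \emph{always} encodes a positive operator: if a tensor network of the purification shape (as in Figure~\ref{fig:C-star-FCS}) represents an operator $\sigma = \Tr_{\mathrm{anc}} \ketbra{\psi}{\psi}$, then $\sigma \geq 0$ by construction. Hence producing a valid purification of a given MPO certifies its positivity, and positivity is the only obstruction to a purification's existence.

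First I would reduce the threshold question to a bare positivity question. Given a \BMPOTP{} instance $(M, \ket{L}, \ket{R}, \lambda, n)$, the shifted operator $\rho(L,M,R,n) + \lambda \1$ is itself an MPO: the rescaled identity is an MPO of bond dimension one, and taking the direct sum of local tensors (a block-diagonal construction, with concatenated boundary vectors) realizes the sum as a single MPO $\rho' := \rho(L',M',R',n)$ of bond dimension $D+1$ with rational entries computable in polynomial time. The \BMPOTP{} thus asks precisely whether $\rho' \geq 0$.

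Next I would run $A$ on $\rho'$, \emph{ignoring} the positivity promise, aborting after the guaranteed polynomial number of steps; on a positive input $A$ is obliged to halt within this budget. Let $P$ be the output (of polynomial size). I then \emph{verify} whether $P$ is a genuine local purification of $\rho'$, answering ``positive'' if verification succeeds and ``not positive'' otherwise. Correctness follows from the bridge above: if $\rho' \geq 0$, correctness of $A$ guarantees that $P$ is a valid purification and verification succeeds; if $\rho' \not\geq 0$, then $\rho'$ admits no local purification at all, so every candidate---in particular $P$---fails verification.

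The hard part will be the verification step, i.e.\ deciding efficiently whether the candidate $P$ purifies exactly $\rho'$. After the syntactic check that $P$ has purification shape, tracing out its ancillas yields an MPO, and I must test its equality with $\rho'$. I would do this by forming the difference MPO (again of polynomial bond dimension) and computing its Hilbert--Schmidt norm: vectorizing an MPO turns it into an MPS, and $\norm{\,\cdot\,}_2^2$ is evaluated exactly by a transfer-matrix contraction in time polynomial in $n$ and the bond dimensions, vanishing iff the two operators coincide. Since the output of $A$ has a finite (possibly algebraic) description, this contraction can be carried out in exact arithmetic over the field it generates. Chaining these steps yields a polynomial-time algorithm for the \NP-hard \BMPOTP{}, which is impossible unless $\P = \NP$; hence no efficient local purification method exists.
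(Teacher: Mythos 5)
Your proof is correct and takes essentially the same route the paper intends: the corollary is stated as an immediate consequence of Theorem~\ref{thm:hardness}, namely that an efficient local purification method would yield a polynomial-time decision procedure for the \NP-hard \BMPOTP{}. Your run-with-timeout-and-verify construction (including the block-diagonal absorption of $\lambda\1$ and the transfer-matrix equality check on the traced-out purification, which handles inputs violating the positivity promise) correctly supplies the details the paper leaves implicit.
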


In the \BMPOTP{} one is asked to \emph{exactly} delineate the MPOs with smallest eigenvalues above $-\lambda$ from those with smallest eigenvalues below $-\lambda$. 
In practice, it would be acceptable if an algorithm reliably recognizes whether a state $\rho$ is either sufficiently positive, i.e., $\rho \geq -\lambda$, or violates a threshold by at least $\epsilon\geq 0$, i.e., $\rho \ngeq -(\lambda +\epsilon)$. 
Such an approximate version is allowed to give unspecified results on the narrow band between the two cases. 
In order to make this precise, we state the \BMPOTP{} as a \emph{weak membership problem}: 
For $\epsilon>0$ one is only required to decide whether an MPO instance $(L,M,R)$ with $\Tr(\rho(L,M,R,n))=1$ satisfies either
$\rho(L,M,R,n) \geq - \lambda$ or 
$\rho(L,M,R,n) \ngeq -(\lambda+\epsilon)$.
The MPO provided in the proof of Theorem~\ref{thm:hardness} has a trace that is exponentially bounded from above. Hence, as a corollary, one obtains that the \BMPOTP{} remains \NP-hard as a weak membership problem if $\epsilon$ is exponentially small in $n$. 
This statement remains true for algebraic and not necessarily rational inputs. 
Weak membership formulations seem to be natural for a variety of problems in quantum information. 
For instance, \NP-hardness of testing separability of quantum states as a weak membership problem was established first \cite{Gur03} for an exponentially small ``error'' $\epsilon$ and, much later \cite{Gha10}, for a polynomially
small $\epsilon$, in fact, using key methods of the previous approach \cite{Gur03}. 
Hence, our work is an invitation to explore whether the \BMPOTP{} as a weak membership problem is also \NP-hard for only polynomially bounded $\epsilon$ or, instead, to actually find an algorithm that efficiently solves that problem. 

\section{Details}
For any finite set $\Sigma$ (alphabet) we denote by $\Sigma^n$ the set of sequences $w$ (words) of $n=|w|$ elements (letters) from $\Sigma$ and by $\Sigma^\ast \coloneqq \bigcup_{n \in \NN} \Sigma^n$ the set of words. 
For $w \in [d]^\ast$ we denote by 
$\ket w = \bigotimes_{j=1}^{|w|} \ket{w_j}$ the tensor product of the corresponding canonical basis states $\ket 1, \dots, \ket d$. 
Given two monoids $W$ and $W'$ we call a map $U:W \to W'$ a \emph{morphism} if $U$ maps the identity element of $W$ to the identity element of $W'$ and $U(w_1 w_2) = U(w_1) U(w_2)$ for all $w_1,w_2 \in W$. 
The monoids we encounter here are either given by words over an alphabet with concatenation or by matrices with matrix multiplication. 
Next, we introduce the famous \emph{Post correspondence problem} \cite{PCP} and a bounded variant thereof. 

\begin{problem}[Bounded Post correspondence problem (\BPCP)]\label{p:BPCP}
\hfill \vspace{-\baselineskip}
 \begin{description}[noitemsep,leftmargin=\widthof{Question:},font=\normalfont]
  \item [Instance] Pairs of words $(u_\alpha,v_\alpha)_{\alpha \in [d]}$ over a finite alphabet $\Sigma$ and length $n$ in unary notation \cite{Footnote}. 
  \item [Question] Does there exist a non-empty word $w \in [d]^n$ of length $n$ such that $u_{w_1} u_{w_2} \dots u_{w_{n}} = v_{w_1} v_{w_2} \dots v_{w_{n}}$? 
 \end{description}
\end{problem}

\begin{problem}[Post correspondence problem (\PCP)] \label{p:PCP}
 The \PCP{} is defined in the same way as the \BPCP{} except that there is no restriction on the word length. 
\end{problem}

The two sets of words $(u_\alpha,v_\alpha)_{\alpha \in [d]}$, referred to as \emph{dominos}, define two morphisms
$U, V : [d]^\ast \to \Sigma^\ast$ given by $U(w) = u_{w_1}u_{w_2} \dots u_{w_{|w|}}$ and similar for $V$. 

\begin{theorem}[\PCP{} is undecidable \cite{MatSen05}]
\label{thm:PCP_undecidable}
 For every $d\geq 7$ the $\PCP$ with $\Sigma=\{0,1\}$ is undecidable. 
\end{theorem}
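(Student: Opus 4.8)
The plan is to establish undecidability by a reduction from the halting problem for Turing machines, whose undecidability is classical, and only afterwards worry about the specific parameters $d=7$ and $\Sigma=\{0,1\}$. First I would recall the standard encoding of Turing-machine computations as strings: a configuration is a word over the tape alphabet enriched by a marker for the head position and the internal state, and an accepting computation on a given input is a sequence $C_0 \# C_1 \# \dots \# C_k$ in which $C_0$ is the initial configuration, consecutive configurations are related by a single transition, and $C_k$ is halting. The entire undecidability content lies in the observation that deciding whether such a sequence exists is equivalent to the halting problem.

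The heart of the construction is to design the dominoes $(u_\alpha,v_\alpha)$ so that any matching word $w$ with $U(w)=V(w)$ is forced to spell out exactly such a computation history. The standard device is to arrange the dominoes so that, while a common string is being built up, the bottom concatenation $V(w)$ always runs one configuration ahead of the top concatenation $U(w)$: a dedicated start domino lays down $C_0$, a family of \emph{copy} dominoes transcribes unchanged tape symbols, \emph{transition} dominoes locally implement the machine's update rules in the vicinity of the head, and \emph{halting} dominoes finally let the top string catch up with the bottom one. Equality $U(w)=V(w)$ is then attainable if and only if the machine halts. Because a \PCP{} solution may begin with \emph{any} domino, I would route the argument through the \emph{modified} Post correspondence problem, in which the first domino is fixed, and then apply the standard padding reduction with an auxiliary delimiter symbol to recover the ordinary \PCP{}.

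It remains to compress the instance to the advertised parameters. Reducing the working alphabet to $\Sigma=\{0,1\}$ is routine: each symbol is block-encoded by a fixed-length binary string and the dominoes are rewritten accordingly, which preserves solutions in both directions. Reducing the number of dominoes to $d=7$ is the genuinely difficult step and is precisely the contribution of Matiyasevich and S\'enizergues~\cite{MatSen05}; it demands a far more economical simulation than the textbook construction, encoding an arbitrary computation through a carefully optimized semi-Thue / tag-system representation so that seven pairs of words suffice. I expect this final step---squeezing the combinatorics down to seven dominoes while retaining full Turing power---to be the main obstacle, whereas undecidability for some fixed but larger $d$ over a binary alphabet is comparatively standard.
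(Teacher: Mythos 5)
Your proposal matches the paper's treatment: the paper does not prove this theorem itself but imports it from Matiyasevich and S\'enizergues \cite{MatSen05}, sketching only the textbook computation-history reduction (for a variant with larger $d$, in support of Observation~\ref{obs:PCP_NP}), exactly as you do. Your account of the binary-alphabet block encoding as routine and of the compression to seven dominoes (via an optimized semi-Thue simulation) as the cited work's genuine contribution is accurate; the only cosmetic omission is the one-line padding argument (e.g., duplicating an existing domino) extending undecidability from $d=7$ to every $d\geq 7$.
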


In fact, by noting a simpler proof \cite{Sip05} of a variant of this theorem with larger $d$, one can make the following computation theoretic statement: 
  
\begin{observation}[\BPCP{} is \NP-complete]\label{obs:PCP_NP}
There is a polynomial $p$ such that for any non-deterministic Turing machine $\mc{M}$ and input $x$ there is a reduction to an instance $U,V$ of the \BPCP{} such that $\mc{M}$ accepts $x$ in $n$ steps iff there is a solution of $U,V$ of length $p(n)$. 
\end{observation}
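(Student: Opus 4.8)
The plan is to establish the two directions of \NP-completeness separately. Membership in \NP{} is the easy part: given a \BPCP{} instance $(u_\alpha,v_\alpha)_{\alpha\in[d]}$ together with a target length $n$ encoded in unary, one nondeterministically guesses a word $w\in[d]^n$ and checks whether $U(w)=V(w)$. Since each $u_\alpha,v_\alpha$ has length bounded by the instance size and $n$ is given in unary, both strings $U(w)$ and $V(w)$ have length polynomial in the input, so the equality test runs in polynomial time; hence $\BPCP\in\NP$.

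For hardness I would use the classical \emph{computation-history} reduction underlying the undecidability of the \PCP{} (the construction behind Theorem~\ref{thm:PCP_undecidable}), but carefully bookkeep the number of dominos so as to control the length of the resulting match. Given a nondeterministic Turing machine $\mc M$ and input $x$, one first builds a \emph{modified} \PCP{} instance (in which the first domino is forced) whose dominos are: a start domino $(\#,\,\# q_0 x\,\#)$; a copy domino $(a,a)$ for every tape symbol $a$ together with a separator $(\#,\#)$; for each transition of $\mc M$ a local rewrite domino (e.g.\ $(q a,\, b q')$ for a right move and $(c q a,\, q' c b)$ for a left move); and finally accept-closing dominos that let the upper word catch up with the lower once the accepting state appears. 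The standard rigidity argument shows that any match must spell out, in the two words, a valid accepting computation history of $\mc M$ on $x$, the lower word running exactly one configuration ahead of the upper one.

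The key observation for the bounded setting is that the \emph{number} of dominos used in such a match is determined by the history it encodes: producing one width-$\ell$ configuration from the previous one consumes one transition domino plus $\ell-\landauO(1)$ copy dominos and a separator, so a computation with $n$ steps on a tape of width $\landauO(n)$ yields a match of length $\Theta(n^2)$. To turn ``$\mc M$ accepts $x$ in $n$ steps'' into the \emph{exact} length condition demanded by Problem~\ref{p:BPCP}, I would first normalize $\mc M$ by a clock-and-padding so that every accepting run takes exactly $n$ steps and occupies a tape region of fixed width $2n+\landauO(1)$ (using $|x|\le n$ without loss of generality). Every configuration in the history then has the same length, so the history, and hence the match, has a fixed length $p(n)=\Theta(n^2)$ given by a single polynomial $p$ independent of the particular machine. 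Converting the modified \PCP{} instance into a genuine \BPCP{} instance by the usual interleaving trick (a padding symbol between letters plus dedicated opening/closing dominos) changes the match length only by an additive constant and preserves the correspondence. Since every \NP{} language is decided by such a normalized machine in a fixed polynomial number of steps, composing these maps yields a polynomial-time reduction, establishing \NP-hardness and, with the membership above, \NP-completeness.

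The main obstacle is precisely the \emph{exactness} of the length correspondence: the raw history reduction only bounds the match length, whereas Problem~\ref{p:BPCP} asks for a solution of a prescribed length. This is why the clocking-and-padding step is essential --- it removes the slack by forcing all accepting histories (and hence all matches) to have identical size --- and one must verify, using the rigidity of the Sipser-type construction, that no match of any other length can arise, so that a solution of length $p(n)$ exists if and only if $\mc M$ accepts $x$ in $n$ steps.
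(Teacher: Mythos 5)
Your proof is correct and takes essentially the same route as the paper: \NP-membership via polynomial-time verification of a guessed word, and \NP-hardness via the standard Sipser-type computation-history reduction applied to a non-deterministic Turing machine, keeping track of the match length through the MPCP-to-\PCP{} conversion. Your clocking-and-padding argument simply makes explicit the exact-length bookkeeping that the paper's proof sketch asserts without detail.
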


In the usual textbook proof of the undecidability of the \PCP{} (see, e.g., Ref.~\cite{Sip05}) the halting problem is reduced to the \PCP. 
The idea of the proof is to encode the computation history of a given Turing machine into the two morphisms of the \PCP{} in two different ways such that there is a solution iff the Turing machine halts. 
Specifically, there is a polynomial $p$ such that exactly when a Turing machine accepts an input after $n$ steps then there is a solution of the corresponding \PCP{} instance of length $p(n)$. 
The encoding works in the same way when the Turing machine is replaced by a non-deterministic Turing machine because having a transition relation instead of a transition function allows us to define the \PCP{} instances in the same way (possibly with more ``dominos'' $(u_\alpha,v_\alpha)$). 
This shows that if one could solve the \BPCP{} in polynomial time, then one could also solve \NP-problems in polynomial time. 
As solutions to the \BPCP{} can be verified by a Turing machine in polynomial time the \BPCP{} is in \NP. 

For the proofs of Theorems~\ref{thm:hardness} and~\ref{thm:undecidability} the two following polynomial reductions are needed (see the Appendix for proofs building on Refs.~\cite{BloCat03,Hir06,BloTsi97}):

\begin{lemma}\label{lem:d6}
Let $(u_j,v_j)_{j \in [d]}$ be any instance of the \PCP{} and 
 $\lambda \in \RRalg$ be a threshold. 
 Then there exist boundary vectors $\ket L, \ket R$, and matrices 
 $A^{(1)}, \dots, A^{(d)} \in \NN^{6 \times 6}$ that define a morphism 
 $A:[d]^\ast \to \NN^{6 \times 6}$ such that for all $w \in [d]^\ast$
 \begin{equation}\label{eq:Asandwich}
 \begin{aligned}
 \sandwich{L}{A^{(w)}}{R} &=  -(\lambda+1)  &&\text{if}&& U(w)=V(w),\\
 \sandwich{L}{A^{(w)}}{R} &\geq -\lambda    &&\text{if}&& U(w)\neq V(w) .
 \end{aligned}
 \end{equation}
\end{lemma}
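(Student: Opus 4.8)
The plan is to \emph{linearize} the two morphisms $U$ and $V$ by encoding words as integers, and then to expose the coincidence $U(w)=V(w)$ as the vanishing of a single quadratic quantity that becomes a \emph{linear} readout once one passes to a symmetric square. First I would fix the base $b\coloneqq|\Sigma|+1$ and identify the letters of $\Sigma$ with the \emph{nonzero} digits $1,\dots,|\Sigma|$, encoding a word $x\in\Sigma^\ast$ by the integer $e(x)\coloneqq\sum_{i=1}^{|x|}x_i\,b^{|x|-i}\in\NN$. Because no digit is zero, all length-$\ell$ encodings lie in $[b^{\ell-1},\infty)$ whereas shorter words encode to strictly smaller integers; hence $e$ is \emph{injective}, so $e(x)=e(y)\iff x=y$, and this is the only place where the digit/base choice matters. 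Concatenation obeys the affine law $e(xy)=e(x)\,b^{|y|}+e(y)$, which, together with a constant coordinate, turns appending the domino $j$ into multiplication by the nonnegative integer matrix
\begin{equation}\nonumber
 T_j=\begin{pmatrix}1&0&0\\ e(u_j)&b^{|u_j|}&0\\ e(v_j)&0&b^{|v_j|}\end{pmatrix}\in\NN^{3\times3},
\end{equation}
acting on the triple $(1,\,e(U),\,e(V))$. Reading the product $A^{(w_1)}\cdots A^{(w_{|w|})}$ as a right action on $\bra L$ processes the letters in their natural order, so $T(w)\coloneqq T_{w_1}\cdots T_{w_{|w|}}$ sends $(1,0,0)$ to $(1,\,e(U(w)),\,e(V(w)))$.

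Second, I would pass to the second symmetric power. Set $A^{(j)}\coloneqq\mathrm{Sym}^2(T_j)\in\NN^{6\times6}$, the matrix describing the induced action on the six monomials $(1,\,e_U,\,e_V,\,e_U^2,\,e_U e_V,\,e_V^2)$ of degree $\le 2$. Each entry of $\mathrm{Sym}^2(T_j)$ is a sum of products of entries of $T_j$ with nonnegative integer coefficients, hence a nonnegative integer, and $\mathrm{Sym}^2$ is multiplicative, so $A\colon[d]^\ast\to\NN^{6\times6}$ is a genuine morphism with $A(w)=\mathrm{Sym}^2(T(w))$ --- exactly the object the lemma asks for. I then choose the boundary vectors to carry all the signs and the threshold: let $\bra L$ be the monomial embedding of the empty-word triple $(1,0,0)$, i.e.\ $\bra L=(1,0,0,0,0,0)$, and let $\ket R=(-(\lambda+1),0,0,1,-2,1)^{T}\in\QQ^{6}$ be the functional extracting $e_U^2-2e_Ue_V+e_V^2-(\lambda+1)$. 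Contracting gives
\begin{equation}\nonumber
 \sandwich{L}{A^{(w)}}{R}=\bigl(e(U(w))-e(V(w))\bigr)^2-(\lambda+1).
\end{equation}

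Finally, the dichotomy of Eq.~\eqref{eq:Asandwich} is immediate from injectivity of $e$: if $U(w)=V(w)$ then $e(U(w))=e(V(w))$ and the square vanishes, yielding exactly $-(\lambda+1)$; if $U(w)\neq V(w)$ then $e(U(w))\neq e(V(w))$ are distinct integers, so their squared difference is at least $1$ and the value is at least $1-(\lambda+1)=-\lambda$. Note that $\lambda$ enters only through the constant entry of $\ket R$, so the construction works verbatim for every $\lambda\in\RRalg$. The steps that require genuine care --- and which I view as the crux --- are (i) guaranteeing that numerical equality $e(U(w))=e(V(w))$ really forces the \emph{words} to agree, which is why the base must exceed $|\Sigma|$ and the digits must be nonzero (otherwise leading zeros create spurious collisions), and (ii) checking that the symmetric-square construction stays inside $\NN^{6\times6}$ while remaining a morphism compatible with the left-to-right order of the MPO product; everything else is routine bookkeeping.
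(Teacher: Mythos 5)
Your overall strategy coincides with the paper's: the explicit $6\times 6$ matrix $A(u,v)$ used in the paper (following Blondel--Canterini and Hirvensalo) is precisely the symmetric square of a $3\times 3$ affine word-encoding matrix, the boundary vectors read off $\bigl(\sigma(U(w))-\sigma(V(w))\bigr)^2-(\lambda+1)$, and the dichotomy rests on the same integrality gap (a squared difference of distinct integers is at least $1$); your injective base-$(|\Sigma|+1)$ encoding is an acceptable substitute for the paper's bijective base-$|\Sigma|$ numeration. However, there is a genuine error at exactly the point you flag as crux (ii), the compatibility with left-to-right multiplication. With $T_j$ as you wrote it, the first \emph{row} of every $T_j$ is $(1,0,0)$, so your $\bra{L}$, the monomial embedding of $(1,0,0)$, is a fixed point of the right action: $\bra{L}A^{(j)}=\bra{L}$ for every $j$, hence $\sandwich{L}{A^{(w)}}{R}=\braket{L}{R}=-(\lambda+1)$ \emph{identically in} $w$, and your displayed identity is false. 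The accumulation you want lives in the \emph{column} action, $T_j(1,x,y)^T=\bigl(1,\,b^{|u_j|}x+e(u_j),\,b^{|v_j|}y+e(v_j)\bigr)^T$, and there the natural-order product encodes the \emph{reversed} word: $T_{w_1}\cdots T_{w_n}(1,0,0)^T=\bigl(1,\,e(u_{w_n}\cdots u_{w_1}),\,e(v_{w_n}\cdots v_{w_1})\bigr)^T$. The natural-order encodings $e(U(w))$, $e(V(w))$ simply do not occur among the entries of $T_{w_1}\cdots T_{w_n}$, so this cannot be cured by tuning $\ket L,\ket R$ alone; and the reversed condition $u_{w_n}\cdots u_{w_1}=v_{w_n}\cdots v_{w_1}$ is \emph{not} equivalent to $U(w)=V(w)$ at fixed $w$ (take $u_1=ab$, $v_1=a$, $u_2=b$, $v_2=bb$: then $w=12$ is a match while its reversal $21$ is not), so the per-$w$ statement of Lemma~\ref{lem:d6} is not established.

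The repair is standard and is exactly what the paper's matrix implements: put the affine data into the \emph{rows}, i.e.\ replace $T_j$ by $T_j^T$ and define $A^{(j)}$ by the induced right action on row monomial vectors $m(1,x,y)=(1,x,y,x^2,xy,y^2)$, via $m(v)\,A^{(j)}\coloneqq m(v\,T_j^T)$; then $(1,x,y)\mapsto\bigl(1,\,b^{|u_j|}x+e(u_j),\,b^{|v_j|}y+e(v_j)\bigr)$ under right multiplication, left-to-right products concatenate in natural order, the entries remain nonnegative integers, multiplicativity holds, and the contraction with your $\bra{L}$ and $\ket{R}$ yields the claimed value. In the paper this is visible in the morphism property $A(u_1,v_1)A(u_2,v_2)=A(u_1u_2,v_1v_2)$ with the quadratic monomials sitting in row $6$, together with $\ket L=\ket 6$ and $\ket R=\ket 1-\ket 2+\ket 3-(\lambda+1)\ket 6$. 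Alternatively you may keep your $T_j$, move the monomial vector to the right and the quadratic functional to the left, and build the matrices from the letter-reversed dominos $\tilde u_j,\tilde v_j$, using that $U(w)=V(w)$ iff $\widetilde{U(w)}=\widetilde{V(w)}$ and $\widetilde{U(w)}=\tilde u_{w_n}\cdots\tilde u_{w_1}$. For the paper's downstream use (existence of a solution of a given length) even an uncorrected reversal would be harmless, since $w\mapsto\tilde w$ is a length-preserving bijection of $[d]^\ast$; but as a proof of the lemma itself the orientation must be fixed, and as written your contraction computes a constant.
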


\begin{lemma}[\cite{BloTsi97}]
\label{lem:kto2}
 Let $d,D \geq 2$, 
 $A^{(1)}, \dots, A^{(d)}\in \KK^{D\times D}$ be matrices that define a morphism 
 $A: [d]^\ast \to \KK^{D\times D}$, and $\ket L, \ket R \in \KK^{D}$ be boundary vectors. 
 Then there exist two matrices $B^{(1)}$ and $B^{(2)}$ that define a morphism 
 $B: [2]^\ast \to \KK^{D d\times D d}$ together with an injective morphism 
 $X: [d]^\ast \to [2]^\ast$ satisfying 
 $|X(w)| = d |w|$ such that 
 $\sandwichb{\tilde L}{B^{(X(w))}}{\tilde R} = \sandwich{L}{A^{(w)}}{R}$
 for all $w \in [d]^\ast$, 
 where $\ketn{\tilde X} \coloneqq (\ket X, 0,\dots, 0)^T$.
\end{lemma}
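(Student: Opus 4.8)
The plan is to enlarge the bond space by a $d$-dimensional \emph{pointer} register, working in $\KK^{Dd}\cong\KK^{d}\otimes\KK^{D}$, and to spend the extra factor of $d$ in the word length on spelling out each original letter as a fixed-length binary codeword. Concretely, I would let $\ket1,\dots,\ket d$ be the standard basis of the pointer space $\KK^d$ and define the morphism $X$ on single letters by $X(\alpha)\coloneqq 1^{\,d-\alpha}\,2\,1^{\,\alpha-1}\in[2]^d$, i.e.\ the length-$d$ binary word carrying a single marker $2$ whose position encodes $\alpha$. Each codeword has length exactly $d$, so $|X(w)|=d\,|w|$, and since the codewords have equal length with the marker in pairwise distinct positions, $X$ is a block code and hence injective on $[d]^\ast$.

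Next I would define the two replacement tensors on $\KK^d\otimes\KK^D$. Writing $S\coloneqq\sum_{k=1}^{d}\ketbra{k+1}{k}$ for the cyclic shift of the pointer (indices mod $d$, so $\ket{d+1}\coloneqq\ket1$), set
\begin{equation}\nonumber
	B^{(1)}\coloneqq S\otimes\1_D,\qquad
	B^{(2)}\coloneqq\sum_{k=1}^{d}\ketbra{k+1}{k}\otimes A^{(k)}.
\end{equation}
Both tensors have entries in $\KK$, so $B$ is a morphism $[2]^\ast\to\KK^{Dd\times Dd}$. Reading a $1$ merely advances the pointer, while reading a $2$ advances the pointer \emph{and}, conditioned on the current pointer value $k$, applies $A^{(k)}$ to the payload register. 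The new boundary vectors are the pointer-initialised states $\ket{\tilde L}\coloneqq\ket1\otimes\ket L=(\ket L,0,\dots,0)^T$ and $\ket{\tilde R}\coloneqq\ket1\otimes\ket R=(\ket R,0,\dots,0)^T$, matching the prescribed form under the block identification of $\KK^d\otimes\KK^D$ with $\KK^{Dd}$.

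The key step is the single-block identity $B^{(X(\alpha))}\bigl(\ket1\otimes\ket\psi\bigr)=\ket1\otimes A^{(\alpha)}\ket\psi$, which I would verify by tracking the pointer as the product $B^{(b_1)}\cdots B^{(b_d)}$ acts on $\ket R$ from the right. The $\alpha-1$ rightmost $1$'s move the pointer from $1$ to $\alpha$; the marker $2$ then fires $B^{(2)}$ on $\ket\alpha\otimes\ket\psi$, producing $\ket{\alpha+1}\otimes A^{(\alpha)}\ket\psi$; and the remaining $d-\alpha$ $1$'s cyclically return the pointer to $1$ with the payload untouched. Because $X$ and $B$ are both morphisms, these blocks telescope: $B^{(X(w))}(\ket1\otimes\ket R)=\ket1\otimes A^{(w_1)}\cdots A^{(w_n)}\ket R=\ket1\otimes A^{(w)}\ket R$, and pairing with $\bra{\tilde L}=\bra1\otimes\bra L$ collapses the pointer to give $\sandwichb{\tilde L}{B^{(X(w))}}{\tilde R}=\sandwich{L}{A^{(w)}}{R}$.

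The main obstacle is purely the index bookkeeping: one must align the position of the marker $2$ inside $X(\alpha)$ with the pointer value at the instant $B^{(2)}$ fires, so that precisely $A^{(\alpha)}$ — rather than some cyclically shifted $A^{(k)}$ — is applied, and one must ensure the pointer is restored to its start after \emph{every} block so that consecutive blocks compose without interference. This is where the reading direction matters: since the product acts on $\ket{\tilde R}$ from the right, the marker must sit at distance $\alpha$ from the right end of the codeword, which is exactly the choice $X(\alpha)=1^{\,d-\alpha}\,2\,1^{\,\alpha-1}$ above. Once this alignment is fixed, the remaining verification is the routine telescoping sketched in the previous paragraph.
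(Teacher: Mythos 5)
Your construction is correct, and it follows the same Blondel--Tsitsiklis blueprint as the paper: enlarge the bond space to $\KK^{d}\otimes\KK^{D}$, use one generator to cyclically shift a $d$-valued pointer, and encode each letter $\alpha\in[d]$ as a length-$d$ word over $[2]$ whose single marker position determines which $A^{(k)}$ fires. The difference lies in how the shift and the conditional application are distributed over the two generators, and it is not merely cosmetic. The paper takes $B^{(1)}=\diag\bigl(A^{(1)},\dots,A^{(d)}\bigr)$ (conditional application, no shift) and $B^{(2)}$ a pure block shift, whereas you take $B^{(1)}=S\otimes\1_D$ and fold an \emph{extra} shift into the marker, $B^{(2)}=\sum_{k}\ketbran{k+1}{k}\otimes A^{(k)}$. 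This buys exact pointer bookkeeping: each of your codewords effects precisely $d$ shifts, so the pointer returns to $\ketn{1}$ after every block, and the identity $B^{(X(\alpha))}\bigl(\ketn{1}\otimes\ketn{\psi}\bigr)=\ketn{1}\otimes A^{(\alpha)}\ketn{\psi}$ holds verbatim with $|X(w)|=d\,|w|$ and the stated boundary vectors. In the paper's arrangement, a codeword containing one non-shifting marker performs only $d-1$ shifts, so the pointer is \emph{not} restored; indeed the paper's $C^{(\alpha)}\coloneqq(B^{(2)})^{\alpha-1}B^{(1)}(B^{(2)})^{d-(\alpha-1)}$ is a product of $d+1$ generators, and its claimed identity $C^{(w)}=B^{(X(w))}$ with $X(\alpha)=2^{\alpha-1}1\,2^{d-\alpha}$ contains an off-by-one: $B^{(X(1))}=B^{(1)}(B^{(2)})^{d-1}\neq B^{(1)}=C^{(1)}$ in general, and for $d=2$ and the single-letter word $w=1$ the printed formulas yield $\sandwichb{\tilde L}{B^{(X(1))}}{\tilde R}=0$ rather than $\sandwich{L}{A^{(1)}}{R}$. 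That glitch is harmless for the paper's complexity conclusions (taking codewords $2^{\alpha-1}1\,2^{d-\alpha+1}$ of length $d+1$ repairs it, changing the system-size blow-up from $dn$ to $(d+1)n$, still polynomial), but your variant is the one that actually delivers the lemma's stated $|X(w)|=d\,|w|$. Your remaining checks are in order: injectivity follows from fixed block length with pairwise distinct marker positions, both $B^{(1)}$ and $B^{(2)}$ have entries in $\KK$, $\ketn{1}\otimes\ket{L}$ matches the prescribed form $(\ket{L},0,\dots,0)^T$ under the pointer-major identification, and the empty word is covered since $B^{(\emptyset)}=\1$ gives $\braketn{\tilde L}{\tilde R}=\braketn{L}{R}$.
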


\begin{proof}[Proof of Theorems~\ref{thm:hardness} and~\ref{thm:undecidability}]
We prove the theorem by using the encoding $A :[d]^\ast \to \RRalg^{D\times D}$ of the \PCP{} with $d$ ``dominos'' into the matrices from Lemma~\ref{lem:d6}. 
Using Lemma~\ref{lem:kto2} we reduce the physical dimension $d$ to $2$ at the expense of having a larger bond dimension $d D$ and an increase of the system size $n$ to $d\, n$. 
This results in an encoding $C:[2]^\ast \to \RRalg^{dD\times dD}$ with boundary vectors $\ketb{\tilde L}$ and $\ketb{\tilde R}$. 
Now we define an MPO-tensor $M$ by 
$M^{(\alpha,\beta)} = \sum_{\gamma = 1}^d \delta_{\alpha,\gamma} \delta_{\beta,\gamma} C^{\gamma} $. 
Then $(\ket L, M, \ket R)$ is an encoding of \PCP{} to \MPOTP. 
All successively used encodings are polynomial reductions. 
In particular, an instance of \BPCP{} with word length $n$ can be written as an instance of \BMPOTP{} with system size $d n$. 
Hence, Theorem~\ref{thm:PCP_undecidable} and Observation~\ref{obs:PCP_NP} finish the proof. 
\end{proof}

\section{Conclusions}
In this work, we have shown that a problem naturally occurring in the context of tensor network states is \NP-hard and in the thermodynamic limit even undecidable. 
The findings point to the challenge for reliable numerical methods for, e.g., finding Gibbs and stationary states of quantum many-body systems: 
Truncations in the bond dimension --- a common step in existing numerical algorithms --- can introduce inconsistencies that cannot be found computationally. 
This insight provides an interesting twist to numerical methods to capture mixed quantum many-body systems as well as to notions of Hamiltonian complexity \cite{HamiltonianComplexity1,HamiltonianComplexity2}. 
Future research should follow a dual aim: First, identify instances and approximations where (near-)pos\-itivity can be guaranteed; second, search for further problems in the context of tensor network states that are not decidable algorithmically. 

\section{Acknowledgments}
The work of M.K.\ and J.E.\ is supported by the Studienstiftung des Deutschen Volkes, the FQXi, the EU
(RAQUEL, SIQS), and the ERC (TAQ). 
D.G.\ acknowledges support from the Excellence Initiative of the German Federal and State Governments (Grant ZUK 43), from the US Army Research Office under Contracts W911NF-14-1-0098 and W911NF-14-1-0133 and the Freiburg Research Innovation Fund. 

\newpage
\section{Appendix}
\appendix
In the first part of this appendix we discuss different kinds of postive matrix factorizations, positive ranks, and cones. 
In the last part we prove Lemmas~\ref{lem:d6} and~\ref{lem:kto2} using ideas from Refs.~\cite{BloCat03,BloTsi97}. 

\section{Matrix factorizations}
\label{sec:matrixfactorisations}
In this section we first comment on the gap between the usual \emph{rank} and the \emph{non-negative rank} of a matrix. 
Then we briefly discuss another notion of positive rank, called \emph{semidefinite rank}. 

\paragraph*{Non-negative rank.} 
The problem of finding HMM realizations is closely connected with the theory of \emph{non-negative matrix factorizations}. 
Any entry-wise non-negative matrix $F$, e.g., given by $F_{\alpha,\beta} = \Pr[Y_1 = \alpha, Y_2 = \beta]$, can be written as 
\begin{equation}
 F = \sum_{j=1}^{D} \ketbra{L_i}{R_i}, 
\end{equation}
i.e., as a sum of rank-$1$ matrices. This is a quasi-realization of the process $(Y_1, Y_2)$ with two time steps. 
The minimal $D$ is the usual rank of $F$. 
Similarly, the \emph{positive rank} of $F$ is the minimum number of entry-wise non-negative rank-$1$ matrices that sum up to $F$. 
Any such decomposition into rank-$1$ matrices with the proper normalization is a HMM for $(Y_1, Y_2)$. 

We thus find that the auxiliary dimension $D$ of a quasi-realization of two discrete random variables is nothing but the rank of the joint probability distribution seen as a matrix; 
while the smallest $D$ in a HMM is the non-negative rank. 
Old results \cite{FoxRub68,DhaNad70} showing that minimal bond dimensions for realizations and quasi-realizations are distinct can thus be re-interpreted in terms of gaps between rank and non-negative rank. 
Actually finding the non-negative rank is known to be \NP-hard \cite{vavasiv_complexity_2009}. 

Non-negative matrix factorizations have been studied extensively, just to name two famous examples, in Ref.~\cite{lee_learning_1999} for influential applications in machine learning theory, and in Ref.~\cite{lee_algorithms_2000} for algorithms.

\paragraph*{Positive semidefinite rank.} 
If $A_\alpha, B_\beta$ are non-negative matrices, then $\Tr(A_\alpha B_\beta)$ is clearly a non-negative number. 
One defines a \emph{positive semidefinite (PSD) factorization} of a matrix $F$ to be a choice of positive semidefinite matrices $A_\alpha\geq 0, B_\beta \geq 0$ such that 
\begin{equation}
	F_{\alpha,\beta} = \Tr( A_\alpha B_\beta).
\end{equation}
The \emph{PSD rank} of $F$ is the smallest dimension $D$ such that there is a PSD factorization of $F$ in terms of $D\times D$-matrices. 
These notions have recently attracted considerable attention
\cite{yannakakis_expressing_1991,fiorini_exponential_2011,gouveia_lifts_2011} 
as novel lower-bounds on the PSD rank (partly derived in the study of quantum communication complexity \cite{wolf_nondeterministic_2003}) have been used to disprove the existence of efficient semidefinite
programs that would solve certain \NP-hard combinatorial optimization problems \cite{fiorini_exponential_2011}. The problem of identifying semidefinite factorizations for empirically observed statistics of quantum experiments has been treated in Ref.~\cite{Cyril13} and prior works by the same author referenced there.

\paragraph*{Bond dimensions of MPOs and local purifications.}
\label{sec:local_purifications}
As in the previous case, one can easily check that a PSD factorization for $F$ 
immediately gives a realization of the \emph{classical} bivariate distribution defined by $F$ in terms of a \emph{quantum} HMM, i.e., a local purification.
Of course, classical distributions can be embedded into quantum states.  
This was used in Ref.~\cite{CueSchPer13} to leverage the fact \cite{fiorini_exponential_2011} that there are families of $F$s whose rank is bounded, but whose PSD rank diverges to show that there are bi-partite quantum states with bounded bond dimensions as MPOs, but unbounded ones for any local purification.

\section{Cones and generalizations}
\label{sec:cones}
One can abstract even further. Element-wise positivity and semidefiniteness are two notions of positivity in vector spaces. 
There is a systematic theory of general ordered vector spaces \cite{cones_and_duality}. 
Notions of order in real vector spaces stands in one-one correspondence to (Archimedean) closed convex cones. 
The cone represents those elements of the vector space that are deemed ``non-negative''. For two vectors $v,w\in V$, one says $v\leq w$  iff $v-w$ is an element of that cone. 
One can now define generalized HMMs by the ``physical'' and the ``auxiliary'' vector spaces carrying various orders and by requiring that the $F$s are positive maps in the sense that they preserve these orders. 
We can then distinguish various cases. 
If both physical and classical spaces carry the usual element-wise order, we recover the traditional notion of HMM. If both carry the semidefinite order, we obtain quantum Markov processes. 
If the physical one is element-wise, but the auxiliary one semidefinite, one arrives at the model treated in Ref.~\cite{CueSchPer13} and also in this work. 

Such \emph{conal matrix factorizations} have recently been discussed in the optimization literature 
\cite{fiorini_exponential_2011,gouveia_lifts_2011,FioMasPat13}. 
Notably, researchers working on stochastic process both in a classical \cite{Dha63} and quantum \cite[Appendix]{FanNacWer92} setting have appreciated these ideas already quite early. 
They are also inherent to the \emph{generalized probabilistic theories} approach to quantum foundations \cite{Bar07,BarBarLei06,Hol82,GroMueCol10}.

\section{Proof of Lemmas~\ref{lem:d6} and~\ref{lem:kto2}}
\label{sec:proofs}
\begin{proof}[Proof of Lemma~\ref{lem:d6}]
For an alphabet $\Sigma$ of size $b=|\Sigma|$ the \emph{numeric representation} of $\Sigma^\ast$ is the map 
$\sigma: \Sigma^\ast \to \NN$ given by
\begin{equation}
 \sigma(w) \coloneqq \sum_{j = 1}^{|w|} \sigma(w_j) b^{|w|-j} 
\end{equation}
where $\sigma(\emptyset) = 0$ and $\sigma\!\!\restr_\Sigma : \Sigma \to [b]$ is a bijection (enumeration of $\Sigma$). 
Note that for any two words $u,v \in \Sigma^\ast$
\begin{equation}\label{eq:sigma_rule}
 \sigma(uv) = b^{|v|}\sigma(u) + \sigma(v)
\end{equation}
and $\sigma(u) = \sigma(v)$ iff $u=v$. 
Note that for all $w$
\begin{equation}
 b^{{|w|}-1}\leq \sigma(w) \leq b^{|w|} -1 .
\end{equation}

Next, we define $A: \Sigma^\ast \times \Sigma^\ast \to \NN^{6\times 6}$ as in Refs.~\cite{BloCat03,Hir06} by
\begin{equation}\nonumber
 A(u,v) = \begin{pmatrix}
             b^{2 |u|} & 0 & 0 & 0 & 0 & 0\\
             0 & b^{|u|+|v|} & 0 & 0 & 0 & 0\\
             0 &  0 & b^{2|v|} & 0 & 0 & 0\\
             \sigma(u) b^{|u|} & \sigma(v) b^{|u|} & 0 & b^{|u|} & 0 & 0 \\
             0 & \sigma(u) b^{|v|} & \sigma(v) b^{|v|} & 0 & b^{|v|} & 0\\
             \sigma(u)^2 & 2 \sigma(u)\,\sigma(v) & \sigma(v)^2 & 2 \sigma(u) & 2 \sigma(v) & 1 
          \end{pmatrix}.
\end{equation}
Using the property \eqref{eq:sigma_rule} of the numeric representation of $\Sigma^\ast$ it follows that 
\begin{equation}\label{eq:A_morphism_property}
 A(u_1,v_1)\, A(u_2,v_2) = A(u_1 u_2, v_1 v_2)
\end{equation}
for all words $u_1, u_2, v_2, v_2 \in \Sigma^\ast$. 
Now let $U,V: [d] \to \Sigma$ be an instance of the \PCP{} and define 
$A^{(\alpha)} \coloneqq A(U(\alpha),V(\alpha))$ for all $\alpha \in [d]^\ast$ 
and the boundary vectors  as $\ket L = \ket 6$ and 
$\ket R = \ket 1 - \ket 2 + \ket 3 -(\lambda+1) \ket 6$. 
Thanks to Eq.~\eqref{eq:A_morphism_property}, $A$ viewed as a map on $[d]$ extends to a morphism on $[d]^\ast$, as required. 
For any $w \in [d]^\ast$ we have $U(w) = V(w)$ iff $A^{(w)}_{6,4} = A^{(w)}_{6,5}$. 
Moreover,
\begin{equation}
 \sandwich{L}{A^{(w)}}{R} 
 = \bigl(\sigma(U(w)) - \sigma(V(w))\bigr)^2 -(\lambda+1) \, ,
\end{equation}
and Eq.~\eqref{eq:Asandwich} follows.
\end{proof}

\begin{proof}[Proof of Lemma~\ref{lem:kto2}]
We using a technique from Ref.~\cite{BloTsi97}. 
Define $B^{(1)} \coloneqq \diag\left(A^{(\alpha)} \right)_{\alpha \in [d]}$ 
to be the block diagonal matrix having $A^{(\alpha)}$ in increasing order as the diagonal blocks and
\begin{equation}
 B^{(2)} \coloneqq \begin{pmatrix}
                    0 & \1_{D(d-1)} \\
                    \1_D & 0 
                   \end{pmatrix}. 
\end{equation}
Now we construct a bijective morphism $X: [d]^\ast \to [2]^\ast$ satisfying 
\begin{equation}
 \sandwichb{\tilde L}{B^{(X(w))}}{\tilde R}
 =
 \sandwich{L}{A^{(w)}}{R} .
\end{equation}
First, note that $B^{(2)} = \ls \otimes \1_D$, where $\ls$ is the permutation matrix representing the cyclic permutation $(1,2,\dots, d) \mapsto (2,3,\dots, d,1)$. 
This means that $B^{(2)}$ acts as the cyclic left shift on the $d$ blocks of $D$ neighbouring components of column vectors. 
Hence 
\begin{align}
 C^{(\alpha)} &\coloneqq {B^{(2)}}^{\alpha-1} \, B^{(1)}\, {B^{(2)}}^{d-(\alpha-1)}
 \\
 &=\diag(A^{(\alpha)}, \dots, A^{(d)}, A^{(1)}, \dots, A^{(\alpha-1)}) \label{eq:Cw}
\end{align}
for all $\alpha \in [d]$. 
Eq.~\eqref{eq:Cw} implies that for any $w \in [d]^\ast$ the upper left $D\times D$ block of $C^{(w)}$ is $A^{(w)}$. 
Next, define $\ketb{\tilde L}, \ketb{\tilde R} \in \ZZ^{dD}$ to be the vectors that have $\ket L$ and $\ket R$ as their first $D$ components respectively and all other components equal to zero. Then we obtain
\begin{equation}\label{eq:CA}
 \sandwichb{\tilde L}{C^{(w)}}{\tilde R} = \sandwich{L}{A^{(w)}}{R}
\end{equation}
for all $w \in [d]^\ast$. 

For $s \in \NN$ and $\alpha \in [d]$ we denote by $\alpha^s$ the word that consists of $s$ times the letter $\alpha$. 
Now we define a morphism $X: [d]^\ast \to [2]^\ast$ by
\begin{equation}
 X(\alpha) \coloneqq 2^{\alpha-1} 1\, 2^{d-\alpha}
\end{equation}
for $\alpha \in [d]$. Note that for any $w \in [d]^\ast$ we have $|X(w)| = d |w|$ and that one can reconstruct  $w$ from just knowing $X(w)$, i.e., $X$ is injective. 
Using Eq.~\eqref{eq:Cw} we obtain
\begin{equation}
 C^{(w)} = B^{(X(w))}.
\end{equation}
Together with Eq.~\eqref{eq:CA} this finishes the proof. 
\end{proof}


\end{document}